\documentclass
[preprint,byrevtex,pre,nofootinbib,10pt,balancelastpage,titlepage,bibnotes,twocolumn]{revtex4}%
\usepackage{amsfonts}
\usepackage{amsmath}
\usepackage{amssymb}
\usepackage{graphicx}%
\setcounter{MaxMatrixCols}{30}
\providecommand{\U}[1]{\protect\rule{.1in}{.1in}}
\newtheorem{theorem}{Theorem}

\newenvironment{proof}[1][Proof]{\noindent\textbf{#1.} }{\ \rule{0.5em}{0.5em}}
\ifx\pdfoutput\relax\let\pdfoutput=\undefined\fi
\newcount\msipdfoutput
\ifx\pdfoutput\undefined\else
\ifcase\pdfoutput\else
\msipdfoutput=1
\ifx\paperwidth\undefined\else
\ifdim\paperheight=0pt\relax\else\pdfpageheight\paperheight\fi
\ifdim\paperwidth=0pt\relax\else\pdfpagewidth\paperwidth\fi
\fi\fi\fi
\begin{document}
\preprint{UATP/1701}
\title{Nonequilibrium Work and its Hamiltonian Connection for a Microstate in
Nonequilibrium Statistical Thermodynamics: A \ Case of Mistaken Identity}
\author{P.D. Gujrati}
\email{pdg@uakron.edu}
\affiliation{Department of Physics, Department of Polymer Science, The University of Akron,
Akron, OH 44325}

\begin{abstract}
Nonequilibrium work-Hamiltonian connection for a microstate plays a central
role in diverse branches of statistical thermodynamics (fluctuation theorems,
quantum thermodynamics, stochastic thermodynamics, etc.). We show that the
change in the Hamiltonian for a microstate should be identified with the work
done by it, and not the work done \emph{on} it. This contradicts the current
practice in the field. The difference represents a contribution whose average
gives the work that is dissipated due to irreversibility. As the latter has
been overlooked, the current identification does not properly account for
irreversibilty. As an example, we show that the corrected version of
Jarzynski's relation can be applied to free expansion, where the original
relation fails. Thus, the correction has far-reaching consequences and
requires reassessment of current applications.

\end{abstract}
\date{January 31, 2017}
\maketitle

The work-Hamiltonian relations at the level of \emph{microstates} play a
central role in the application of the first law to a system $\Sigma$ under
nonequilibrium conditions in a medium $\widetilde{\Sigma}$ in diverse branches
of nonequilibrium (NEQ) statistical thermodynamics including but not limited
to \emph{nonequilibrium work theorems}
\cite{Bochkov,Jarzynski,Crooks,Pitaevskii}, \emph{stochastic}
\emph{thermodynamics} \cite{Sekimoto,Seifert} and \emph{quantum
thermodynamics} \cite{Lebowitz,Alicki}. Once work is identified, heat is
identified by invoking the first law. Unfortunately, this endeavor has given
rise to a controversy about the actual meaning of work, which apparently is
far from settled
\cite{Cohen,Jarzynski-Cohen,Sung,Gross,Jarzynski-Gross,Peliti,Rubi,Jarzynski-Rubi,Rubi-Jarzynski,Peliti-Rubi,Rubi-Peliti,Pitaevskii,Bochkov}%
. The controversy is distinct from the confusion about the meaning of work and
heat in classical nonequilibrium thermodynamics \cite[for example, and
references therein]{Fermi,Gislason,Kestin} involving a \emph{system-intrinsic
}(SI)\emph{ }or\emph{ medium-intrinsic} (MI)\ description, the latter only
recently been clarified
\cite{Gujrati-Heat-Work0,Gujrati-Heat-Work,Gujrati-I,Gujrati-II,Gujrati-III,Gujrati-Entropy1,Gujrati-Entropy2}%
. For example, the classical thermodynamic formulations of work in the two
descriptions take the form $PdV$ or\emph{ }$P_{0}dV$, respectively, in terms
of the instantaneous pressure $P$ of $\Sigma$ or $P_{0}~$of $\widetilde{\Sigma
}$ and their volume change $dV$ or $-dV$, respectively. As usual, $\Sigma$ and
$\widetilde{\Sigma}$ form an isolated system $\Sigma_{0}$. Throughout this
work, we will assume that $\Sigma$ and $\widetilde{\Sigma}$ are statistically
quasi-independent \cite{Gujrati-II,Gujrati-Entropy2} so that their entropies
are additive.

In equilibrium (EQ), the state of $\Sigma$ is described by its set of
extensive observables, or state variables, such as its energy $E$, volume $V$,
the number of particles $N$, etc. We collectively denote them by $\mathbf{X}$.
They are controllable from outside by an observer and appear as parameters in
the Hamiltonian $\mathcal{H}$ of $\Sigma$. The equilibrium entropy is a state
function $S(\mathbf{X})$ of $\mathbf{X}$. Away from equilibrium, we also need
an additional set of extensive internal variables $\mathbf{\xi}$
\cite{Gujrati-I,Gujrati-II,Gujrati-Entropy2,deGroot,Prigogine,Maugin} to
specify the state of a NEQ system so that $\mathbf{Z=(X},\mathbf{\xi)}$
denotes the set of state variables. The internal variables are not
controllable from outside by an observer. A NEQ state for which its entropy is
a state function $S(\mathbf{Z})$ of $\mathbf{Z}$ is said to be an
\emph{internal equilibrium} (IEQ) state \cite{Gujrati-I,Gujrati-II}; if not,
the entropy $S(\mathbf{Z},t)$ is an explicit function of time $t$. If we do
not use $\mathbf{\xi}$ to specify a NEQ state, then the entropy $S(\mathbf{X}%
,t)$ again is not a state function even if we have an IEQ state. In any NEQ
state, there are going to be internal processes that are beyond the control of
an observer, but have to be accounted for a proper thermodynamic description.

The MI-description always refers to exchange quantities (work or heat, to be
denoted here by their modern notation $d_{\text{e}}W$ or $\allowbreak
d_{\text{e}}Q$ \cite{note-0,deGroot,Prigogine}, respectively) between the
medium and the system and involve quantities referring to the medium and are
readily identifiable and measurable. No internal variables are required since
their affinities vanish for the medium as it is always in equilibrium. The
SI-description always refers to quantities (work or heat, to be denoted here
by $dW$ or $\allowbreak dQ$, respectively) intrinsic to the system,
\textit{i.e.} they contain quantities pertaining to the system alone. These
quantities represent SI quantities and may include internal variables as their
affinities do not vanish for $\Sigma$; in many cases, they may not be readily
measurable or even identifiable and require care in interpreting results.
Therefore, the use of exchange quantities is quite widespread. Despite this,
we have concluded
\cite{Gujrati-Heat-Work0,Gujrati-Heat-Work,Gujrati-I,Gujrati-II,Gujrati-III,Gujrati-Entropy1,Gujrati-Entropy2}
that the SI-description is more appropriate to study nonequilibrium processes,
even if we do not use $\mathbf{\xi}$, since $\mathcal{H}$ plays a central role
for dynamics.

Traditional formulation of nonequilibrium statistical mechanics and
thermodynamics \cite{Landau,Gibbs} starts with a mechanistic approach in which
$\Sigma$, or more precisely its microstate $\mathcal{M}$, follows its
classical or quantum mechanical evolution in time, which will require focusing
on the Hamiltonian $\mathcal{H}$ of the system, a SI quantity; the interaction
with $\widetilde{\Sigma}$ is usually treated as a very weak stochastic
perturbation on it. This immediately suggests adopting a SI-description.
Unfortunately, this description has been overlooked by the current
practitioners in the field who have consistently used a MI-description such as
in Jarzynski's nonequilibrium work relation%
\begin{equation}
\left\langle e^{-\beta_{0}\Delta\widetilde{W}_{k}^{\prime}}\right\rangle
_{0}\overset{?}{=}e^{-\beta_{0}\Delta F^{\prime}}~\ \ (=\left\langle
e^{\beta_{0}\Delta W_{k}^{\prime}}\right\rangle _{0})\label{JarzynskiRelation}%
\end{equation}
to be explained below. In Eqs. (\ref{JarzynskiRelation}%
-\ref{Av-ExternalWork-Medium}), the question mark on the first (and currently
widely accepted) equality implies that it may be questionable in all cases. We
will establish that \emph{an appreciation of the SI-description will paved the
way for a correct work-Hamiltonian relation for a microstate }resulting in the
last equality in Eqs. (\ref{JarzynskiRelation}-\ref{Av-ExternalWork-Medium})
in all cases. This is the main point of this paper. We then draw attention to
some of its consequences. We have applied this approach to the set $\left\{
\mathcal{M}_{k}\right\}  $ of microstates
\cite{Gujrati-Heat-Work0,Gujrati-Heat-Work,Gujrati-Entropy1,Gujrati-Entropy2}
to obtain a \emph{microscopic representation} of work and heat in terms of the
set of microstate probabilities $\left\{  p_{k}\right\}  $, which will be
exploited here. As we will be dealing with microstates, we will mostly use
their energy set $\left\{  E_{k}\right\}  $ instead of $\mathcal{H}$ in the following.

In Eq. (\ref{JarzynskiRelation}), $\left\langle {}\right\rangle _{0}$ refers
to averaging with respect to the canonical probability distribution $\left\{
e^{-\beta_{0}E_{k}^{\prime}}/Z_{\text{in}}^{\prime}(\beta_{0})\right\}  $ of
the initial equilibrium state A$_{\text{eq}}$, $Z_{\text{in}}^{\prime}%
(\beta_{0})$ the initial equilibrium partition function for the system,
$\Delta\widetilde{W}_{k}^{\prime}$ the work done \emph{on} the $k$th
microstate by the external working medium $\widetilde{\Sigma}$ during a
process $\gamma$ (not necessarily reversible) connecting the terminal states,
both at the same inverse temperature $\beta_{0}$, and $\Delta F^{\prime}$ is
the change in the free energy between those states. (The mystery behind the
prime will become clear later.) However, the average $\left\langle
{}\right\rangle _{0}$ in Eq. (\ref{JarzynskiRelation}) does not represent a
thermodynamic average over $\gamma$. The inverse temperature along $\gamma$
may not always exist or may be different than $\beta_{0}$ due to
irreversibility \cite{Cohen,Muschik}.

The differential external work $d\widetilde{W}_{k}$ done \emph{on} the $k$th
microstate $\mathcal{M}_{k}$\ during a segment of the path $\gamma$ between
$t$\ and $t+dt$\ is identified as the change $dE_{k}\doteq E_{k}%
(t+dt)-E_{k}(t)$ in its energy $E_{k}(\mathbf{Z})$ as parameters in
$\mathbf{Z}$ change:%
\begin{equation}
d\widetilde{W}_{k}\overset{?}{=}\frac{\partial E_{k}}{\partial\mathbf{Z}%
}.d\mathbf{Z}\equiv dE_{k}\ \ \ (=-dW_{k})\text{.}\label{ExternalWork-Medium}%
\end{equation}
Observe that the microstate $\mathcal{M}_{k}$\ does not change during the
performance of the above work, only its energy changes as $\mathbf{Z}$
changes. The relation is common to all of the three domains of activities
\cite[ for example]{Jarzynski,Alicki,Sekimoto} except that $\mathbf{Z}$ is
replaced by $\mathbf{X}$. The net external work done on the system is the
integral over the path $\gamma$
\begin{equation}
\Delta\widetilde{W}_{k}\overset{?}{=}%
{\textstyle\int\nolimits_{\gamma}}
\frac{\partial E_{k}}{\partial\mathbf{Z}}.d\mathbf{Z=}%
{\textstyle\int\nolimits_{\gamma}}
dE_{k}\ \ \ (=-\Delta W_{k})\mathbf{.}\label{ExternalWork-Medium-Total}%
\end{equation}
The thermodynamic works $d\widetilde{W}$ and $\Delta\widetilde{W}$\ are
averages over $\left\{  p_{k}\right\}  $ (not to be confused with
$\left\langle {}\right\rangle _{0}$ in Eq. (\ref{JarzynskiRelation})):%
\begin{equation}
d\widetilde{W}\overset{?}{=}%
{\textstyle\sum\limits_{k}}
p_{k}dE_{k}~(=-dW),\Delta\widetilde{W}\overset{?}{=}%
{\textstyle\int\nolimits_{\gamma}}
{\textstyle\sum\limits_{k}}
p_{k}dE_{k}~(=-\Delta W)\mathbf{.}\label{Av-ExternalWork-Medium}%
\end{equation}

Before proceeding further, we introduce various notions of work that are
relevant here by two simple examples.

(1) Consider as our system a general but purely a classical mechanical
one-dimensional spring of arbitrary Hamiltonian $\mathcal{H}(x,p)$ with one
end fixed at an immobile wall and the other end of mass $m$ free to move. The
free end is pulled by an \emph{external} force (not necessarily a constant)
$F_{0}$ applied at time $t=0$. A microstate $\mathcal{M}$ can be thought of as
a small phase space area with its center at $k\doteq(x,p)$. However, we will
not show the index $k$ below for simplicity. Initially the spring is
undisturbed and has zero\emph{ }SI restoring spring force $F=-\partial
\mathcal{H}/\partial x$. The total force $F_{\text{t}}=F_{0}+F$ acts like the
force \emph{imbalance} $F_{\text{t}}\lessgtr0$. There is no mechanical
equilibrium unless $F_{\text{t}}=0$ and the spring continues to stretch or
contract. The\emph{ }SI work done by $F$ is identified as the\emph{ }work
$dW\doteq Fdx$ performed by the spring, while the work performed by $F_{0}$ is
identified as the work $d\widetilde{W}=F_{0}dx$ \emph{transferred} to the
spring; its negative $d_{\text{e}}W=-F_{0}dx$ is identified as the work
performed by the spring \emph{against} the external force. As this is a purely
mechanical example, there is no dissipation. Despite this, we can introduce
using the modern notation \cite{note-0}
\begin{equation}
d_{\text{i}}W\doteq dW-d_{\text{e}}W\equiv dW+d\widetilde{W}\equiv
F_{\text{t}}dx,\label{diW-microstate}%
\end{equation}
which can be of either sign and represents the work done by the imbalance
$F_{\text{t}}$. Thus, $dW,d_{\text{e}}W=-d\widetilde{W}$ and $d_{\text{i}}W$
represent \emph{different} works, a result that has nothing to do with
irreversibility but only with the imbalance; among these, only $dW$ is a SI
work. The change in the Hamiltonian $\mathcal{H}=E$ of the spring due to a
variation in the work variable $x$ is $\left.  d\mathcal{H}\right\vert
_{\text{w}}=\left.  dE\right\vert _{\text{w}}=Fdx=-dW\neq d\widetilde{W}$,
where we have used a suffix w to refer to the change caused by the performance
of work.

Above, we have considered the \emph{exclusive} Hamiltonian $\mathcal{H}$
\cite{Jarzynski}. Let us consider the \emph{inclusive} Hamiltonian
$\mathcal{H}^{\prime}=E^{\prime}\doteq E-F_{0}x$ \cite{Jarzynski} used in
deriving Eq. (\ref{JarzynskiRelation}); this explains the presence of a prime
there. For the inclusive energy $E^{\prime}$, $dE^{\prime}=dE-d(F_{0}%
x)=-F_{\text{t}}dx-xdF_{0}.$ As $\partial E^{\prime}/\partial x=-F_{\text{t}}$
does not identically vanish, $E^{\prime}(x,F_{0})$ is a function of \emph{two}
work parameters $x$ and $F_{0}$. As $\partial E^{\prime}/\partial F_{0}=-x$,
$x$ is the generalized force conjugate to $F_{0}$. The corresponding SI work
$dW^{\prime}$ consists of two contributions due to variations in $x$ and
$F_{0}$: $dW^{\prime}=dW_{x}^{\prime}+dW_{F_{0}}^{\prime}=F_{\text{t}%
}dx+xdF_{0}$ and satisfies $dW^{\prime}=-\left.  dE^{\prime}\right\vert
_{\text{w}}$ just as above with the exclusive\emph{ }Hamiltonian $\mathcal{H}%
$. Furthermore, the contribution $dW_{F_{0}}^{\prime}\equiv xdF_{0}$
represents the exchange work $d_{\text{e}}W^{\prime}$ with the medium so that
$d\widetilde{W}^{\prime}=-xdF_{0}\neq\left.  dE^{\prime}\right\vert
_{\text{w}}$ \cite{Jarzynski} represents the exchange work that appears in the
left side of Eq. (\ref{JarzynskiRelation}). The following identities are
always satisfied:%
\begin{equation}%
\begin{tabular}
[c]{c}%
$\left.  dE^{\prime}\right\vert _{\text{w}}-\left.  dE\right\vert _{\text{w}%
}\doteq dW-dW^{\prime}\equiv-d(F_{0}x),$\\
$d\widetilde{W}^{\prime}-d\widetilde{W}\doteq d_{\text{e}}W-d_{\text{e}%
}W^{\prime}\equiv-d(F_{0}x).$\\
$d_{\text{i}}W^{\prime}\equiv d_{\text{i}}W\equiv F_{\text{t}}dx.$%
\end{tabular}
\ \ \label{Inclusive-Exclusive-Change}%
\end{equation}

Let us investigate the case $F_{0}\equiv0$. In this case, $F_{\text{t}}=F$,
and $dW=dW^{\prime}=Fdx\neq0$  and $d\widetilde{W}=d\widetilde{W}^{\prime}=0$
as a consequence of $\mathcal{H}=\mathcal{H}^{\prime}$. Such a situation
arises when the spring, which is initially kept locked in a compressed (or
elongated) state is unlocked to let go without applying any external force.
Here, $dW=dW^{\prime}\neq0$, while $d\widetilde{W}=d\widetilde{W}^{\prime}=0$
as the spring expands (or contracts) under the influence of its spring force
$F$.

(2) To incorporate dissipation, we consider a thermodynamic analog of the
above example: a gas in a cylinder, closed at one end and a movable piston at
the other end. The piston is locked and the gas has a pressure $P$. We first
focus on various work averages to understand the form of dissipation. At time
$t=0$, an external pressure $P_{0}<P$ is applied on the piston and the lock on
the piston is released. We should formally make the substitution $x\rightarrow
V,F\rightarrow P$ (or $P_{k}$ when considering $\mathcal{M}_{k}$) and
$F_{0}\rightarrow-P_{0}$. The gas expands $(dV\geq0)$ and $P\searrow P_{0}$.
The SI work done by the gas is $dW=PdV$, while $d\widetilde{W}=-P_{0}%
dV=-d_{\text{e}}W$. The difference $d_{\text{i}}W\doteq dW-d_{\text{e}%
}W=(P-P_{0})dV\geq0$ appears as the work that is dissipated in the form of
heat ($d_{\text{i}}Q\equiv d_{\text{i}}W$\ as will be shown below) either due
to the friction between the piston and the cylinder or other dissipative
forces like the viscosity of the gas.

Let us analyze this model more carefully at a microstate level but without
using any $\mathbf{\xi}$ for the sake of simplicity. Let the Hamiltonian of
the gas be denoted by $\mathcal{H}(\mathbf{X})=\allowbreak E(\mathbf{X})$. In
the following, we will only show $V$ and keep all other parameters held fixed.
Therefore, the only work we will consider is due to the generalized force
conjugate to $V$. Let $E_{k}(V)$ denote the energy of some $\mathcal{M}_{k}$
and $P_{k}\doteq-\partial E_{k}/\partial V$; we have $E(V)\doteq\left\langle
E_{k}\right\rangle $ and $P(V)\doteq\left\langle P_{k}\right\rangle $. We
write $dE_{k}\doteq d_{\text{e}}E_{k}\mathcal{+}d_{\text{i}}E_{k}$
\cite{note-0}, and identify $dW_{k}\equiv-dE_{k}=P_{k}dV,d_{\text{e}}%
W_{k}\equiv-d_{\text{e}}E_{k}=P_{0}dV$ and $d_{\text{i}}W_{k}\equiv
-d_{\text{i}}E_{k}=(P_{k}-P_{0})dV$ as above, giving the three
work-Hamiltonian relations for a microstate. After statistical averaging, we
obtain $dW\equiv-dE=PdV,d_{\text{e}}W\equiv-\left.  d_{\text{e}}E\right\vert
_{\text{w}}=P_{0}dV$ and $d_{\text{i}}W\equiv-\left.  d_{\text{i}}E\right\vert
_{\text{w}}=(P-P_{0})dV$ just as discussed above; the suffix w means that
these averages are given by the first terms in Eq. (\ref{Av-dE}) below. 

As with the exclusive Hamiltonian above, the corresponding inclusive
Hamiltonian here for $\mathcal{M}_{k}$ takes the form of a NEQ enthalpy
\cite{Gujrati-II,Gujrati-III} $E_{k}^{\prime}(P_{0},V)=E_{k}(V)+P_{0}V$ and is
a function of $\mathbf{X}^{\prime}=(P_{0}$,$V)$. Therefore, $dW_{k}^{\prime
}=-\left(  \partial E_{k}^{\prime}/\partial V\right)  dV-\left(  \partial
E_{k}^{\prime}/\partial P_{0}\right)  dP_{0}=P_{k}dV-d(P_{0}V)$. We also have
$d\widetilde{W}_{k}=-P_{0}dV$ and $d\widetilde{W}_{k}^{\prime}=-VdP_{0}$ so
that Eq. (\ref{Inclusive-Exclusive-Change}) remains satisfied for each microstate.

The lesson from the two examples is that $dW_{k}$ or $dW$ is a SI quantity but
$d\widetilde{W}_{k}$ or $d\widetilde{W}$ is not. Similarly, microstate
energies $\left\{  E_{k}\right\}  $ and probabilities $\left\{  p_{k}\right\}
$, $\mathcal{H}$ or $d\mathcal{H}$, and the average energy $E$ or $dE$ of the
system are also SI quantities. These SI quantities play a very important role
in our discussion below. We now break $\left.  d\mathcal{H}\right\vert
_{\text{w}}=\left.  dE_{k}\right\vert _{\text{w}}$ for $\mathcal{M}_{k}$\ into
two parts \cite{note-0}:
\begin{equation}
\left.  d\mathcal{H}\right\vert _{\text{w}}=\left.  d_{\text{e}}%
\mathcal{H}\right\vert _{\text{w}}+\left.  d_{\text{i}}\mathcal{H}\right\vert
_{\text{w}}=-dW_{k},\label{Hamiltonian-Partition}%
\end{equation}
where $d_{\text{e}}\mathcal{H}=d_{\text{e}}E_{k}\doteq-d_{\text{e}}W_{k}\equiv
d\widetilde{W}_{k}$ is the change from the external work $d_{\text{e}}W$ and
$d_{\text{i}}\mathcal{H}=d_{\text{i}}E_{k}\doteq-d_{\text{i}}W_{k}$ is the
change due to the internal work $d_{\text{i}}W_{k}$ performed by the imbalance
in the generalized forces on $\mathcal{M}_{k}$. Thus, we make the following
claim in the form of a Theorem:

\begin{theorem}
\textbf{ }$\mathbf{Thermodynamic}$ $\mathbf{Work}$\textbf{-}$\mathbf{Energy}$
$\mathbf{Principle}$ \emph{The change }$\left.  d\mathcal{H}\right\vert
_{\text{w}}=dE_{k}$\emph{ in the Hamiltonian $\mathcal{H}$ due to work only
must be identified with the SI-work }$dW_{k}$\emph{ and not with
}$d\widetilde{W}_{k}$ for $\mathcal{M}_{k}$. \emph{It} \emph{has two
contributions as shown in Eq. (\ref{Hamiltonian-Partition}). The first one
corresponds to the external work }$d_{\text{e}}W_{k}=-d\widetilde{W}_{k}%
$\emph{ \textit{performed by }}$\mathcal{M}_{k}$\ \emph{\textit{against the
medium }and the second one to the internal work }$d_{\text{i}}W_{k}$
\emph{performed by the imbalance in the generalized forces. After\ a
statistical averaging over all microstates for a system, }$\left.
d_{\text{i}}E\right\vert _{\text{w}}\doteq\left\langle \left.  d_{\text{i}%
}\mathcal{H}\right\vert _{\text{w}}\right\rangle \equiv-d_{\text{i}}W\doteq-%
{\textstyle\sum\nolimits_{k}}
p_{k}d_{\text{i}}W_{k}\leq0$\emph{ results in dissipation in the system with
the inequality referring to irreversibility. }
\end{theorem}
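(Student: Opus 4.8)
The plan is to read the theorem as three linked assertions—the correct identification of $\left.dE_{k}\right\vert_{\text{w}}$, its two-fold decomposition, and the sign of the averaged internal contribution—and to establish each from the microstate force relations already in hand. First I would fix the identification. For each microstate $\mathcal{M}_{k}$ the SI generalized force conjugate to the work variable is $P_{k}\doteq-\partial E_{k}/\partial V$, so the energy change caused purely by varying $V$ is $\left.dE_{k}\right\vert_{\text{w}}=(\partial E_{k}/\partial V)\,dV=-P_{k}\,dV=-dW_{k}$, which is exactly the content of Eq.~(\ref{Hamiltonian-Partition}): this is the work done \emph{by} $\mathcal{M}_{k}$ through its own force $P_{k}$. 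By contrast, the currently used quantity $d\widetilde{W}_{k}=-d_{\text{e}}W_{k}=-P_{0}\,dV$ is controlled by the \emph{medium} force $P_{0}$; since $P_{k}\neq P_{0}$ away from mechanical equilibrium, the questioned equality $d\widetilde{W}_{k}\overset{?}{=}dE_{k}$ in Eq.~(\ref{ExternalWork-Medium}) cannot hold in general, and the correct partner of $\left.dE_{k}\right\vert_{\text{w}}$ is $-dW_{k}$.

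Next I would supply the decomposition. Writing $dE_{k}\doteq d_{\text{e}}E_{k}+d_{\text{i}}E_{k}$ in the standard exchange/internal split, I would identify the exchange part $d_{\text{e}}E_{k}=-d_{\text{e}}W_{k}\equiv d\widetilde{W}_{k}=-P_{0}\,dV$ with the transfer governed by the medium, and the residual $d_{\text{i}}E_{k}=-d_{\text{i}}W_{k}=-(P_{k}-P_{0})\,dV$ with the work done by the force imbalance $P_{k}-P_{0}$, recovering Eq.~(\ref{Hamiltonian-Partition}) term by term. The point to emphasize is that $d\widetilde{W}_{k}$ captures only $d_{\text{e}}E_{k}$ and discards $d_{\text{i}}E_{k}$; this is precisely the piece the current MI-identification overlooks, and it is the one that carries the irreversibility.

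Finally I would establish the averaged inequality. Taking the $\{p_{k}\}$-average gives $-\left.d_{\text{i}}E\right\vert_{\text{w}}=d_{\text{i}}W=\sum_{k}p_{k}\,d_{\text{i}}W_{k}=(P-P_{0})\,dV$ with $P\doteq\langle P_{k}\rangle$. The claim $d_{\text{i}}W\geq0$ then follows because the imbalance fixes the direction of the spontaneous displacement: the piston is driven outward ($dV>0$) only when $P>P_{0}$ and inward ($dV<0$) only when $P<P_{0}$, so $(P-P_{0})\,dV$ is never negative, with equality iff $P=P_{0}$ throughout (the reversible limit); for the general case including internal variables $\mathbf{\xi}$, $d_{\text{i}}W$ is a sum of such imbalance-times-displacement terms, each non-negative by the second law (non-negative entropy production). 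I expect this last step to be the crux: the inequality is genuinely a statement about the \emph{average}, since an individual $d_{\text{i}}W_{k}=(P_{k}-P_{0})\,dV$ can have either sign (a microstate with $P_{k}<P_{0}$ may still be dragged outward), and only after weighting by $\{p_{k}\}$ does the macroscopic imbalance $P-P_{0}$—which by construction sets the sign of $dV$—guarantee $d_{\text{i}}W\geq0$. Care must therefore be taken to phrase the dissipation bound at the averaged level and to tie the sign of $dV$ to the average force $P$, not to the individual $P_{k}$.
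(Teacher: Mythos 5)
Your first two steps are sound and essentially coincide with the paper's reasoning: once the generalized force $-\partial E_{k}/\partial\mathbf{Z}$ is taken seriously, the identification $\left.  dE_{k}\right\vert _{\text{w}}=-dW_{k}$ and the split $dE_{k}=d_{\text{e}}E_{k}+d_{\text{i}}E_{k}$ with $d_{\text{e}}E_{k}=-d_{\text{e}}W_{k}=d\widetilde{W}_{k}$ and $d_{\text{i}}E_{k}=-d_{\text{i}}W_{k}$ are definitional, which is why the paper calls this part "almost trivial." The paper packages the same content slightly differently, by splitting the average $dE$ in Eq.~(\ref{Energy Differential}) into the isentropic sum at fixed $\left\{  p_{k}\right\}  $ (identified with $-dW$) and the isometric sum at fixed $\left\{  E_{k}\right\}  $ (identified with the generalized heat $dQ$), but the content of Eq.~(\ref{Hamiltonian-Partition}) is recovered either way.

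The genuine gap is in your third step, the inequality $d_{\text{i}}W\geq0$. Your argument that "the imbalance fixes the direction of the spontaneous displacement" is a statement about overdamped macroscopic dynamics, not a thermodynamic derivation: with an inertial piston the gas overshoots and recompresses, so there are segments of the motion with $(P-P_{0})dV<0$, and in the stochastic, small-system setting where this theorem is meant to apply (fluctuation theorems), the displacement also fluctuates against the mean force. Moreover, that argument covers only the single $PdV$ variable, and your fallback for the general case --- that each imbalance-times-displacement term is "non-negative by the second law" --- is circular as stated: the second law asserts $d_{\text{i}}S\geq0$, and to transfer that to $d_{\text{i}}W$ you must first \emph{relate} $d_{\text{i}}W$ to the entropy production. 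Supplying that relation is precisely the paper's key step, which your proposal is missing. The paper first derives $d_{\text{i}}W=d_{\text{i}}Q$ (from the two sums in $dE$ together with $dE=d_{\text{e}}Q-d_{\text{e}}W$), which is what justifies calling the averaged internal work "dissipation" at all; it then defines the SI temperature by $dQ=TdS$, writes $dE=TdS-dW$, compares with the exchange form $dE=T_{0}d_{\text{e}}S-d_{\text{e}}W$, and obtains $T_{0}d_{\text{i}}S=(T_{0}-T)dS+d_{\text{i}}W$. Since the two terms on the right are mutually independent, each must be separately nonnegative for $d_{\text{i}}S\geq0$ to hold in all processes, whence $d_{\text{i}}W\geq0$. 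Without this link (or an equivalent one), your final inequality is an assumption equivalent to the conclusion, not a proof of it.
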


\begin{proof}
Based on the two examples, the proof is almost trivial. For the average energy
$E$, we have
\begin{equation}
dE\equiv%
{\textstyle\sum\nolimits_{k}}
p_{k}dE_{k}+%
{\textstyle\sum\nolimits_{k}}
E_{k}dp_{k}.\label{Energy Differential}%
\end{equation}
As $\left\{  p_{k}\right\}  $ is not changed in the first sum, it is evaluated
at \emph{fixed entropy }of $\Sigma$. This isentropic sum $-dW$ is an average
of $-dW_{k}=dE_{k}\doteq(\partial E_{k}/\partial\mathbf{Z})\cdot
d\mathbf{Z}\equiv\allowbreak d_{\text{e}}E_{k}+d_{\text{i}}E_{k}$. This proves
the first part. As $\left\{  E_{k}\right\}  $ is unchanged in the second sum,
it refers to an isometric process at fixed $\mathbf{Z}$ and represents
generalized heat $dQ=d_{\text{e}}Q+d_{\text{i}}Q$. Thus, $dE=dQ-dW$. Since
$dE=d_{\text{e}}Q-d_{\text{e}}W$ also, we $d_{\text{i}}W\equiv d_{\text{i}}Q$.
To prove the last part, we turn to thermodynamics. It can be shown that the
temperature of any thermodynamic state can be defined by $dQ=TdS$
\cite{Gujrati-II,Gujrati-Heat-Work0,Gujrati-Entropy2} so we can write
$dE=TdS-dW$. We rewrite this as $dE=T_{0}d_{\text{e}}S-d_{\text{e}}%
W+T_{0}d_{\text{i}}S+(T-T_{0})dS-d_{\text{i}}W$, which leads to $T_{0}%
d_{\text{i}}S=(T_{0}-T)dS+d_{\text{i}}W$. Each term on the right side, being
independent of each other, must be nonnegative separately to ensure the second
law ($d_{\text{i}}S\geq0$). This proves the last part.
\end{proof}

It should be stressed that $-(\partial E_{k}/\partial\mathbf{Z})$ or
$-(\partial E_{k}^{\prime}/\partial\mathbf{Z}^{\prime})$ represents the
"generalized force" and the work has the conventional form:\ "force"$\times
$"distance," contrary to what is commonly stated. According to the claim, we
must use $-\Delta W^{\prime},-dW_{k},-\Delta W_{k}$ and $-dW,-\Delta W$ on the
left sides in Eqs. (\ref{JarzynskiRelation}-\ref{Av-ExternalWork-Medium}),
respectively, as shown by the enclosed parentheses.

Let us rewrite $dE$ as follows: $dW=TdS-dE=-dF+(T-T_{0})dS$, where
$F=E-T_{0}S$. As shown in the proof above, $(T-T_{0})dS\leq0$. Thus, we
conclude that $dW\leq-dF$. It is also easy to see that $d\widetilde{W}%
=dF+T_{0}d_{\text{i}}S\geq dF$. 

Using the partition of $dE_{k}$ in $dW_{k}$, we have $d_{\text{e}}W=-%
{\textstyle\sum\nolimits_{k}}
p_{k}d_{\text{e}}E_{k},$ \ $d_{\text{i}}W=-%
{\textstyle\sum\nolimits_{k}}
p_{k}d_{\text{i}}E_{k}\geq0$. Similarly, using the partition $dp_{k}\equiv
d_{\text{e}}p_{k}+d_{\text{i}}p_{k}$, we have $d_{\text{e}}Q=%
{\textstyle\sum\nolimits_{k}}
E_{k}d_{\text{e}}p_{k},$ \ $d_{\text{i}}Q=%
{\textstyle\sum\nolimits_{k}}
E_{k}d_{\text{i}}p_{k}\geq0$, and $d_{\text{e}}S=-%
{\textstyle\sum\nolimits_{k}}
\ln p_{k}d_{\text{e}}p_{k},$\ \ $d_{\text{i}}S=-%
{\textstyle\sum\nolimits_{k}}
\ln p_{k}d_{\text{i}}p_{k}\geq0$ as sum over microstates. We finally have%
\begin{align}
\Delta_{\text{e}}W &  =-%
{\textstyle\sum\nolimits_{k}}
{\textstyle\int\nolimits_{\gamma}}
p_{k}d_{\text{e}}E_{k},\Delta_{\text{i}}W=-%
{\textstyle\sum\nolimits_{k}}
{\textstyle\int\nolimits_{\gamma}}
p_{k}d_{\text{i}}E_{k}\geq0,\nonumber\\
\Delta_{\text{e}}Q &  =%
{\textstyle\sum\nolimits_{k}}
{\textstyle\int\nolimits_{\gamma}}
E_{k}d_{\text{e}}p_{k},\ \Delta_{\text{i}}Q=%
{\textstyle\sum\nolimits_{k}}
{\textstyle\int\nolimits_{\gamma}}
E_{k}d_{\text{i}}p_{k}\geq0,\label{CorrectedForms}\\
\Delta W &  =-%
{\textstyle\sum\nolimits_{k}}
{\textstyle\int\nolimits_{\gamma}}
p_{k}dE_{k},\Delta Q=%
{\textstyle\sum\nolimits_{k}}
{\textstyle\int\nolimits_{\gamma}}
E_{k}dp_{k},\nonumber
\end{align}
along with $\Delta_{\text{i}}W=\Delta_{\text{i}}Q$. The above equations
provide the correct identification of all quantities on the right sides of the
first equations in Eqs. (\ref{JarzynskiRelation}-\ref{Av-ExternalWork-Medium})
in terms of the right closed parentheses, and should be used in trajectory
thermodynamics or quantum thermodynamics. Again, using Eq.
(\ref{Hamiltonian-Partition}), we have (using$\ d_{\text{i}}W=d_{\text{i}}Q$
in the top equation)
\begin{subequations}
\label{Av-dE}%
\begin{align}
d_{\text{i}}E &  \doteq%
{\textstyle\sum\nolimits_{k}}
p_{k}d_{\text{i}}E_{k}+%
{\textstyle\sum\nolimits_{k}}
E_{k}d_{\text{i}}p_{k}=0,\label{Av-diE}\\
dE &  =d_{\text{e}}E\doteq%
{\textstyle\sum\nolimits_{k}}
p_{k}d_{\text{e}}E_{k}+%
{\textstyle\sum\nolimits_{k}}
E_{k}d_{\text{e}}p_{k}.\label{Av-deE}%
\end{align}
Even if $d_{\text{i}}E_{k}\neq0$, $d_{\text{i}}E=0$; thus, $E$ cannot change
by internal processes as is well know. The second equation gives the
conventional form of the second law in terms of the exchange quantities:
$dE=d_{\text{e}}E\equiv d_{\text{e}}Q-d_{\text{e}}W$.

Let us evaluate the particular average considered in Eq.
(\ref{JarzynskiRelation}) but of $e^{\beta_{0}\Delta W}$%
\end{subequations}
\[
\left\langle e^{\beta_{0}\Delta W}\right\rangle _{0}\doteq%
{\textstyle\sum\limits_{k}}
\frac{e^{-\beta_{0}E_{k,\text{in}}}}{Z_{\text{in}}(\beta_{0})}e^{\beta
_{0}\Delta W_{k}}=%
{\textstyle\sum\limits_{k}}
\frac{e^{-\beta_{0}E_{k,\text{in}}}}{Z_{\text{in}}(\beta_{0})}e^{-\beta
_{0}\left.  \Delta E_{k}\right\vert _{\text{w}}}.
\]
As the terminal states are equilibrium states, we have $\left.  \Delta
E_{k}\right\vert _{\text{w}}=E_{k,\text{fn}}-E_{k,\text{in}}$. Therefore,%
\begin{equation}
\left\langle e^{\beta_{0}\Delta W}\right\rangle _{0}=%
{\textstyle\sum\limits_{k}}
\frac{e^{-\beta_{0}E_{k,\text{fn}}}}{Z_{\text{in}}(\beta_{0})}=\frac
{Z_{\text{fn}}(\beta_{0})}{Z_{\text{in}}(\beta_{0})}=e^{-\beta_{0}\Delta F};
\label{GujratiRelation}%
\end{equation}
here $F=E-T_{0}S$ for the exclusive Hamiltonian. The same calculation can be
carried out for the inclusive Hamiltonian, with a similar result except all
the quantities must be replaced by their prime analog. This fixes the first
equation, the original Jarzynski relation, in Eq. (\ref{JarzynskiRelation}) by
the second equation in enclosed parentheses.

Let us apply Eq. (\ref{GujratiRelation}) to the example of a free expansion of
a one-dimensional ideal gas of classical particles, but treated quantum
mechanically as a particle in a box with rigid walls. We assume that the gas
is thermalized initially at some temperature $T_{0}$. It is isolated from the
medium so that the free expansion occurs in an isolated system. After the
expansion from the box size $L_{\text{in}}$ to $L_{\text{fn}}>L_{\text{in}}$,
the box is again thermalized at the same temperature $T_{0}$. As discussed at
the end of the first example for $F_{0}=0$, we note that $dW_{k}\neq0$ even
though $d\widetilde{W}_{k}=0$. Since we are dealing with an ideal gas, we can
focus on a single particle whose energy levels are in appropriate units
$E_{k}=k^{2}/L^{2}$, where $L$ is the length of the box. The change is due
only to the work as no heat is allowed. Therefore, $\left.  \Delta
E_{k}\right\vert _{\text{w}}=\Delta E_{k}=k^{2}(1/L_{\text{fn}}^{2}%
-1/L_{\text{in}}^{2})$. The partition function is given by $Z_{\text{in}%
}(\beta_{0})=%
{\textstyle\sum\nolimits_{k}}
e^{\beta_{0}E_{k,\text{in}}}$. It is trivially seen that Eq.
(\ref{GujratiRelation}) is satisfied, whereas the first equation in Eq.
(\ref{JarzynskiRelation}) due to Jarzynski fails in this case.

To conclude, we find that the change $\left.  \Delta\mathcal{H}\right\vert
_{\text{w}}$ in the Hamiltonian due to changes in its parameter is negative of
the work $\Delta W$ done \emph{by} the system, which contradicts the current
practice in diverse applications in nonequilibrium statistical thermodynamics
such as fluctuation theorems, quantum thermodynamics, stochastic
thermodynamics, etc. where $\left.  \Delta\mathcal{H}\right\vert _{\text{w}}$
is related to the work $\Delta\widetilde{W}$ done \emph{on} the system; see
the left hand and right hand sides in Eqs. (\ref{JarzynskiRelation}%
-\ref{Av-ExternalWork-Medium}). The correction ensures that Eq.
(\ref{GujratiRelation}) holds even for free expansion. We believe that the
correction requires complete reassessment of current applications.

\end{document}